\newtheorem{step}{\indent Step}
\newtheorem{drul}{\indent Reduction Rule}
\newtheorem{cor}[theorem]{Corollary}
\begin{document}
\mainmatter          
\pagestyle{headings}  
\addtocmark{} 
\title{Kernelization and Parameterized Algorithms for 3-Path Vertex Cover \thanks{This is the version accepted by TAMC 2016.
To appear in: TAMC 2016, LNCS 9796, pp. 1--15, 2016.}
}
\titlerunning{$3$-Path Vertex Cover}  
%
\author{Mingyu Xiao \and Shaowei Kou}
\authorrunning{ M. Xiao and S. Kou} 
%
\tocauthor{}
\institute{School of Computer Science and Engineering, \\
University of Electronic Science and Technology of China,
Chengdu 611731, China\\
\email{myxiao@gmail.com, kou\_sw@163.com}
}
\maketitle              
\begin{abstract}
A 3-path vertex cover in a graph is a vertex subset $C$ such that every path of three vertices contains at least one vertex from $C$. The parameterized 3-path vertex cover problem asks whether a graph has a 3-path vertex cover of size at most $k$. In this paper, we give a kernel of $5k$ vertices and an $O^*(1.7485^k)$-time polynomial-space
algorithm for this problem, both new results improve previous known bounds.

\end{abstract}

\section{Introduction}
A vertex subset $C$ in a graph is called an \emph{$\ell$-path vertex cover} if every path of $\ell$ vertices in the graph contains at least one vertex from $C$. The $\ell$-path vertex cover problem, to find
an $\ell$-path vertex cover of minimum size, has been studied in the literature~\cite{Bresar:kpath1,Bresar:kpath2}.
When $\ell=2$, this problem becomes the famous vertex cover problem and it has been well studied.
In this paper we study the $3$-path vertex cover problem.
A 3-path vertex cover is also known as a \emph{$1$-degree-bounded deletion set}.
The $d$-degree-bounded deletion problem~\cite{Fellows:BDD,Xiao:BDD,X:dbound} is to delete a minimum number of vertices from a graph such that the
remaining graph has degree at most $d$.
The 3-path vertex cover problem is exactly the $1$-degree-bounded deletion problem.
Several applications of 3-path vertex covers have been proposed in~\cite{Bresar:kpath2,Kardos:3path,Xiao:faw2015}.

It is not hard to establish the NP-hardness of the 3-path vertex cover problem by reduction from the vertex cover problem. In fact, it remains NP-hard even in planar graphs~\cite{Yannakakis} and in $C_4$-free bipartite graphs with vertex degree at most 3~\cite{Boliac:dissociation number}.
There are several graph classes, in which the problem can be solved in polynomial time~\cite{Alekseev,Asdre:pathcover,Boliac:dissociation number,Bresar:kpath2,Cameron:Independentpackings,Goring,Hung:pathcover2,Lozin,Orlovich:dissociation set,Papadimitriou}.

The $3$-path vertex cover problem has been studied from approximation algorithms, exact algorithms and
parameterized algorithms.
 There is a randomized approximation algorithm with an expected approximation ratio of $\frac{23}{11}$~\cite{Kardos:3path}.
 In terms of exact algorithms,
Kardo\v{s} et al. \cite{Kardos:3path} gave an $O^*(1.5171^n)$-time algorithm to compute a maximum dissociation set
in an $n$-vertex graph. Chang et al.~\cite{Chang:BDD1set} gave an $O^*(1.4658^n)$-time algorithm and the result was further improved to $O^*(1.3659^n)$ later~\cite{Xiao:faw2015}.

In parameterized complexity, this problem is fixed-parameter tractable by taking the size $k$
of the 3-path vertex cover
as the parameter.
The running time bound of parameterized algorithm for this problem has been improved at least three times
during the last one year. Tu~\cite{tu:p3} showed that the problem can be solved in $O^*(2^k)$ time.
Wu~\cite{wu:p3} improved the result to $O^*(1.882^k)$ by using the measure-and-conquer method.
The current best result is $O^*(1.8172^k)$ by Katreni\v{c}~\cite{Katrenic:3PVC}.
In this paper we will further improve the bound to  $O^*(1.7485^k)$.

Another important issue in parameterized complexity is kernelization.
A kernelization algorithm is a polynomial-time algorithm which, for an input graph with a parameter $(G,k)$ either concludes that $G$ has no 3-path vertex cover of size $k$ or returns an equivalent instance $(G',k')$, called a \emph{kernel}, such that $k'\leq k$ and the
size of $G'$ is bounded by a function of $k$.
Kernelization for the $d$-degree-bounded deletion problem has been studied in the literature~\cite{Fellows:BDD,Xiao:BDD}.
For $d=1$, Fellows et al.'s algorithm~\cite{Fellows:BDD} implies a kernel of $15k$ vertices for the 3-path vertex cover problem,
and Xiao's algorithm~\cite{Xiao:BDD} implies a kernel of $13k$ vertices.
There is another closed related problem, called the \emph{3-path packing} problem.
In this problem, we are going to check if a graph has a set of at least $k$ vertex-disjoint 3-paths.
When we discuss kernelization algorithms, most structural properties of the 3-path vertex cover problem
and the 3-path packing problem are similar. Several previous kernelization algorithms for the
 3-path packing problem are possible to be modified for the 3-path vertex cover problem.
 The bound of the kernel size of the 3-path packing problem has been improved for several times from the first
 bound of $15k$~\cite{Prieto:stars} to $7k$~\cite{Wang:7k} and then to $6k$~\cite{Chen:packing}.
 Recently, there is a paper claiming a bound of $5k$ vertices for the
 3-path packing problem  in net-free graphs~\cite{Chang:5kkernel}. Although the paper~\cite{Chang:5kkernel} provides some useful ideas,
 the proof in it is incomplete and the algorithm may not stop.
Several techniques for the
 3-path packing problem in~\cite{Wang:7k} and~\cite{Chang:5kkernel} will be used in our kernelization algorithm.
 We will give a kernel of $5k$ vertices for the 3-path vertex cover problem.

Omitted proofs in this extended abstract can be found in the full version of this paper.



\section{Preliminaries}\label{sec pre}
We let $G=(V,E)$ denote a simple and undirected graph with $n=|V|$  vertices and $m=|E|$ edges.
A singleton $\{v\}$ may be simply denoted by $v$.
The vertex set and edge set of a graph $G'$ are denoted by $V(G')$ and $E(G')$, respectively.  For a subgraph (resp., a vertex subset) $X$, the subgraph induced by $V(X)$ (resp., $X$) is simply denoted by $G[X]$, and $G[V\setminus V(X)]$ (resp., $G[V\setminus X]$) is also written as $G\setminus X$.
A vertex in a subgraph or a vertex subset $X$ is also called a \emph{$X$-vertex}.
For a vertex subset $X$, let $N(X)$ denote the set of \emph{open neighbors} of $X$, i.e., the vertices in $V\setminus X$ adjacent to some vertex in $X$, and  $N[X]$ denote the set of \emph{closed neighbors} of $X$, ie., $N(X)\cup X$. The \emph{degree} of a vertex $v$ in a graph $G$, denoted by $d(v)$, is defined to be the number of vertices adjacent to $v$ in $G$.
Two vertex-disjoint subgraphs $X_1$ and $X_2$ are \emph{adjacent} if there is an edge with one endpoint in $X_1$ and the other in $X_2$.
The number of connected components in a graph $G$ is denoted by $Comp(G)$ and the number of
connected components of size $i$ in a graph $G$ is denoted by $Comp_i(G)$. thus, $Comp(G)=\sum_i Comp_i(G)$.

A 3-\emph{path}, denoted by $P_3$, is a simple path with three vertices and two edges.
A vertex subset $C$ is called a \emph{3-path vertex cover} or a $P_3VC$-$set$ if there is no 3-path in $G \setminus C$.
Given a graph $G=(V,E)$, a $P_3$-$packing$ $\mathcal{P} = \{L_1,L_2,...,L_t\}$ of size $t$ is a collection of vertex-disjoint $P_3$ in $G$, i.e., each element $L_i \in \mathcal{P}$ is a 3-path in $G$ and $V(L_{i_1}) \cap V(L_{i_2}) = \emptyset$ for any two different 3-paths $L_{i_i}, L_{i_2} \in \mathcal{P}$.
A $P_3$-packing is \emph{maximal} if it is not properly contained in any strictly larger $P_3$-packing in $G$.
The set of vertices in 3-paths in $\mathcal{P}$ is denoted by $V(\mathcal{P})$.

Let $\mathcal{P}$ be a $P_3$-packing and $A$ be a vertex set such that
$A\cap V(\mathcal{P})=\emptyset$ and $A$ induces a graph of maximum degree 1.
We use $A_i$ to denote the set of degree-$i$ vertices in the induced graph $G[A]$
for $i=0,1$. A component of two vertices in $G[A]$ is called an $A_1$-edge.
For each $L_i \in \mathcal{P}$, we use $A(L_i)$ to denote the set of $A$-vertices that are in the components of $G[A]$ adjacent to $L_i$.
For a 3-path $L_i \in \mathcal{P}$, the degree-2 vertex in it is called the \emph{middle vertex} of it and the two
degree-1 vertices in it are call the \emph{ending vertices} of it.

\section{A Parameterized Algorithm}
In this section we will design a parameterized algorithm for the 3-path vertex cover problem.
Our algorithm is a branch-and-reduce algorithm that runs in $O^*(1.7485^k)$ time and polynomial space,
improving all previous results.
In branch-and-reduce algorithms, the exponential part of the running time is determined by the branching operations in the algorithm. In a branching operation, the algorithm solves the current instance $I$ by solving several smaller instances. We will use the parameter $k$ as the measure of the instance and use $T(k)$ to denote the maximum size of the search tree generated by the algorithm running on any instance with parameter at most $k$.
A branching operation, which
generates $l$ small branches with measure decrease in the $i$-th branch being at least $c_i$,
creates a recurrence relation
$T(k) \le T(k - c_1) + T(k - c_2) + \cdots + T(k - c_l)+1$.
The largest root of the function $f(x) = 1 - \sum_{i=1}^{l} x^{-c_i}$ is called the \emph{branching factor} of the recurrence. Let $\gamma$ be the maximum branching factor among all branching factors in the algorithm.
The running time of the algorithm is bounded by $O^*(\gamma^k)$.
More details about the analysis and how to solve recurrences can be found in the monograph~\cite{Fomin:book}.
Next, we first introduce our branching rules and then present our algorithm.

\subsection{Branching Rules}
We have four branching rules. The first branching rule is simple and easy to observe.

\noindent\textbf{Branching rule (B1):} \emph{Branch on a vertex $v$ to generate $|N[v]| + 1$ branches by either\\
(i) deleting $v$ from the graph, including it to the solution set, and decreasing $k$ by 1, or\\
(ii) deleting $N[v]$ from the graph, including $N(v)$ to the solution set, and decreasing $k$ by $|N(v)|$,    or\\
(iii) for each neighbor $u$ of $v$, deleting $N[\{u,v\}]$ from the graph, including $N(\{u,v\})$ to the
solution set, and decreasing $k$ by $|N(\{u,v\})|$.}

A vertex $v$ is \emph{dominated} by a neighbor $u$ of it if $v$ is adjacent to all neighbors of $u$.
The following property of dominated vertices has been proved and used in~\cite{Xiao:faw2015}.

\begin{lemma} \label{lem_dominate}
Let $v$ be a vertex dominated by $u$. If there is a minimum 3-path vertex cover $C$ not containing $v$, then there is a minimum 3-path vertex cover $C'$ of $G$ such that $v,u \notin C'$ and $N(\{u,v\}) \subseteq C'$.
\end{lemma}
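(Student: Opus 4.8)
The plan is to use an exchange argument: start from the given minimum cover $C$ with $v \notin C$ and modify it into the desired $C'$ without increasing its size. First I would record what domination buys us. Since $v$ is adjacent to every neighbor of $u$, we have $N(u) \setminus \{v\} \subseteq N(v)$, hence $N[u] \subseteq N[v]$, and therefore $N(\{u,v\}) = N[v] \setminus \{u,v\} = N(v) \setminus \{u\}$ (using $u \in N(v)$). Thus the target conclusion is equivalent to asking that $u,v \notin C'$ and $N(v) \setminus \{u\} \subseteq C'$. A second basic observation is that at most one neighbor of $v$ lies outside $C$: if two distinct neighbors $a,b$ of $v$ were both absent from $C$, then the path $a$-$v$-$b$ would be a $P_3$ in $G \setminus C$, contradicting that $C$ is a cover. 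Hence $|N(v) \setminus C| \le 1$.

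In the easy case $u \notin C$, the vertex $u$ is itself a neighbor of $v$ outside $C$, so by the bound above it is the \emph{unique} such neighbor, giving $N(v) \setminus \{u\} \subseteq C$. Since also $v \notin C$, the cover $C' = C$ already meets all requirements.

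The substantial case is $u \in C$, where every neighbor of $v$ outside $C$ differs from $u$. If $N(v) \subseteq C$, I would first argue that $C \setminus \{u\}$ is itself a $P_3VC$-set: reintroducing $u$ leaves it adjacent only to $v$ (its other neighbors lie in $N(v) \subseteq C$), and $v$ then has $u$ as its only surviving neighbor, so no $P_3$ is created; this smaller cover contradicts minimality of $C$, so this subcase is vacuous. Therefore there is a unique $x \in N(v) \setminus C$ with $x \neq u$, and I set $C' = (C \setminus \{u\}) \cup \{x\}$. Because $u \in C$ and $x \notin C$ we have $|C'| = |C|$; moreover $u,v \notin C'$, and $N(v) \setminus \{u\} \subseteq C'$ holds since all neighbors of $v$ other than $u$ and $x$ already lie in $C \setminus \{u\} \subseteq C'$ while $x$ has been inserted.

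The crux, and the step I expect to be the main obstacle, is verifying that this swapped set $C'$ is still a valid cover, i.e. that $G \setminus C'$ contains no $P_3$. Passing from $C$ to $C'$ deletes $x$ and reinserts $u$; deleting a vertex cannot create a $P_3$, so it suffices to control $u$. Here the domination hypothesis is exactly what is needed: every neighbor of $u$ other than $v$ lies in $N(v)$, and all of these already sit in $C \setminus \{u\} \subseteq C'$ (with $x$ also placed in $C'$), so $v$ is the only neighbor of $u$ surviving in $G \setminus C'$. Symmetrically, since $x$ was the unique neighbor of $v$ outside $C$ and $x$ is now in $C'$ while $u$ is out, the only neighbor of $v$ in $G \setminus C'$ is $u$. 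Thus $u$ and $v$ form an isolated edge in $G \setminus C'$, no $P_3$ passes through either of them, and $C'$ is a $P_3VC$-set. Since $|C'| = |C|$ and $C$ is minimum, $C'$ is a minimum cover with the required properties, completing the argument. The delicate part throughout is precisely this adjacency bookkeeping, where domination forces the surviving neighborhoods of $u$ and $v$ to collapse onto each other.
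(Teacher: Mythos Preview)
Your exchange argument is correct and complete: the domination hypothesis $N[u]\subseteq N[v]$ is used exactly where it must be, the uniqueness of a surviving neighbor of $v$ is established cleanly, and in the swap $C'=(C\setminus\{u\})\cup\{x\}$ the verification that $u$ and $v$ become an isolated edge in $G\setminus C'$ (together with $C'\cup\{u\}=C\cup\{x\}\supseteq C$ handling all other $P_3$'s) is airtight. The paper itself does not prove this lemma at all---it simply cites~\cite{Xiao:faw2015}---so there is no in-paper argument to compare against; your proof is precisely the standard one and would serve as a self-contained replacement for the citation.
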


Based on this lemma, we design the following branching rule.

\medskip
\noindent\textbf{Branching rule (B2):} \emph{Branch on a vertex $v$ dominated by another vertex $u$ to generate
two instances by either\\
(i) deleting $v$ from the graph, including it to the solution set, and decreasing $k$ by 1, or\\
(ii) deleting $N[\{u,v\}]$ from the graph, including $N(\{u,v\})$ to the solution set, and decreasing $k$ by
$|N(\{u,v\})|=|N(v)|-1$.
}

For a vertex $v$, a vertex $s\in N_2(v)$ is called a \emph{satellite} of $v$ if there is a neighbor $p$ of $v$ such that $N[p]-N[v]=\{s\}$. The vertex $p$ is also called the \emph{parent} of the satellite $s$ at $v$.

\begin{lemma} \label{lem_sat}
Let $v$ be a vertex that is not dominated by any other vertex.
If $v$ has a satellite,
then there is a minimum 3-path vertex cover $C$ such that either $v \in C$ or $v,u \not\in C$ for a neighbor $u$ of $v$.
\end{lemma}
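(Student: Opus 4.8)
The plan is to argue by an exchange (replacement) argument, starting from an arbitrary minimum $3$-path vertex cover $C$. Let $p$ be the parent of the satellite $s$ at $v$, so that $p\in N(v)$, $s\in N_2(v)$, and $N[p]-N[v]=\{s\}$; unwinding this gives $N(p)\subseteq\{v,s\}\cup N(v)$ with $v,s\in N(p)$. If $v\in C$, then $C$ already witnesses the first alternative and we are done. If $v\notin C$ but some neighbor $u$ of $v$ also lies outside $C$, then $C$ witnesses the second alternative. Hence the only case that needs work is when $v\notin C$ and $N(v)\subseteq C$ (so in particular $p\in C$), and the goal is to transform $C$ into a minimum cover $C'$ avoiding both $v$ and $p$.

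For this I would first observe that $s\notin C$. The key is to classify the $3$-paths passing through $p$: writing such a path and using $N(p)\subseteq\{v,s\}\cup N(v)$ together with $v\notin C$ and $N(v)\subseteq C$, a short case analysis (treating $p$ as an endpoint and as the middle vertex separately) shows that the two vertices other than $p$ can both avoid $C$ only when one of them is $s$, i.e. only for paths of the form $p\text{-}s\text{-}y$ or $v\text{-}p\text{-}s$. Consequently, if $s\in C$ then $C\setminus\{p\}$ would already cover every $3$-path, contradicting the minimality of $C$; therefore $s\notin C$.

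Given $s\notin C$, I would set $C'=(C\setminus\{p\})\cup\{s\}$, which satisfies $|C'|=|C|$ since $p\in C$ and $s\notin C$. The same case analysis shows that the only $3$-paths possibly left uncovered by $C\setminus\{p\}$ are the ones containing $s$, and these are all covered once $s$ is inserted, while every $3$-path not through $p$ is still covered by a vertex of $C$ different from $p$. Thus $C'$ is again a minimum $3$-path vertex cover with $v\notin C'$ and $p\notin C'$, i.e. the second alternative holds with $u=p$. I expect the main obstacle to be the bookkeeping in this case analysis—verifying exhaustively that no $3$-path through $p$ escapes coverage for a reason unrelated to $s$—which is precisely where the defining property $N[p]-N[v]=\{s\}$ of the satellite enters. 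The hypothesis that $v$ is not dominated only serves to place us in the regime where Branching rule (B2) has already been excluded; the exchange itself uses only the structure of $p$ and $s$.
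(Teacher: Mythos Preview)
Your argument is correct. The paper omits its own proof of this lemma (it is deferred to the full version), so there is nothing to compare against directly, but the exchange argument you outline is exactly the natural one and it goes through cleanly. In the only nontrivial case, $v\notin C$ and $N(v)\subseteq C$, your classification of $3$-paths through $p$ is exhaustive: when $p$ is the middle vertex both other vertices lie in $N(p)\subseteq N(v)\cup\{v,s\}$ and hence in $\{v,s\}$ once $N(v)\subseteq C$ is used; when $p$ is an endpoint the adjacent vertex $a$ lies in $\{v,s\}$, and $a=v$ forces the third vertex into $N(v)\subseteq C$. So indeed every $3$-path through $p$ not already hit by $C\setminus\{p\}$ contains $s$, which both yields $s\notin C$ by minimality and shows $(C\setminus\{p\})\cup\{s\}$ is a minimum cover avoiding $v$ and $p$.

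Your closing remark is also accurate: the hypothesis that $v$ is not dominated is not used in the exchange itself; it is present in the lemma only to match the context in which Branching rule~(B3) is applied, after dominated vertices have been handled by~(B2).
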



\medskip
\noindent\textbf{Branching rule (B3):} \emph{Let $v$ be a vertex that has a satellite but is not dominated by any other vertex.
Branch on $v$ to generate $|N[v]|$ instances by either\\
(i) deleting $v$ from the graph, including it to the solution set, and decreasing $k$ by 1, or\\
(ii) for each neighbor $u$ of $v$, deleting $N[\{u,v\}]$ from the graph, including $N(\{u,v\})$ to the solution set,
and decreasing $k$ by $|N(\{u,v\})|$.
}

\begin{lemma} \label{lem_triangle}
Let $v$ be a degree-3 vertex with a degree-1 neighbor $u_1$ and two adjacent neighbors $u_2$ and $u_3$.
There is a minimum 3-path vertex cover $C$ such that either $C\cup \{u_1, v\}=\emptyset$ or $C\cup \{u_1, u_2,u_3\}=\emptyset$.
\end{lemma}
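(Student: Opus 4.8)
The plan is to read the two alternatives as $C\cap\{u_1,v\}=\emptyset$ and $C\cap\{u_1,u_2,u_3\}=\emptyset$ (i.e. a minimum cover that keeps either $\{u_1,v\}$ or $\{u_1,u_2,u_3\}$ out of the solution), and to build such a cover from an arbitrary minimum cover $C$ by two reductions: first cleaning up the pendant $u_1$, then a case analysis on whether $v\in C$, repairing the one awkward case with a local swap. Throughout I use that $v$ has exactly the three neighbors $u_1,u_2,u_3$, that $\{v,u_2,u_3\}$ is a triangle, and that $u_1$ is a pendant, so the only $P_3$'s contained in $N[v]$ are $u_1vu_2$, $u_1vu_3$, $u_2vu_3$, $vu_2u_3$ and $vu_3u_2$ --- every one of which meets $\{v\}$, and, when $v$ is excluded, meets $\{u_2,u_3\}$.

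First I would dispose of $u_1$. Since $d(u_1)=1$, every $P_3$ containing $u_1$ has $u_1$ as an endpoint adjacent to its only neighbor $v$, hence contains $v$. Thus if a minimum $C$ contains $u_1$, then replacing $u_1$ by $v$ (formally $C'=(C\setminus\{u_1\})\cup\{v\}$) still covers every such path and does not increase the size, so I may assume from now on that the minimum cover $C$ satisfies $u_1\notin C$. If $v$ were already in $C$ together with $u_1$, then $C\setminus\{u_1\}$ would be a strictly smaller cover, contradicting minimality, so this situation does not even arise.

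Next I case on $v$. If $v\notin C$, then covering $u_1vu_2$ forces $u_2\in C$ (its other two vertices $u_1,v$ are both outside $C$) and likewise covering $u_1vu_3$ forces $u_3\in C$; hence $u_2,u_3\in C$ and $C\cap\{u_1,v\}=\emptyset$, the first alternative. If $v\in C$, I observe that $u_2,u_3\in C$ cannot both happen for a minimum cover: once $u_2,u_3\in C$, every $P_3$ through $v$ meets $\{u_2,u_3\}$ --- because $v$'s neighbors are $u_1,u_2,u_3$, and $u_1$, being a pendant, cannot extend a path away from $v$ --- so $v$ is redundant and $C\setminus\{v\}$ would be smaller. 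Hence at most one of $u_2,u_3$ lies in $C$. If neither does, then $C\cap\{u_1,u_2,u_3\}=\emptyset$ and the second alternative holds.

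The one remaining case --- and the step I expect to be the only real obstacle --- is $v\in C$ with exactly one of $u_2,u_3$ in $C$, say $u_2\in C$ and $u_3\notin C$. Here I would perform the swap $C'=(C\setminus\{v\})\cup\{u_3\}$, which preserves the size, and then verify it is still a cover: since $u_2,u_3\in C'$, all $P_3$'s inside $N[v]$ are hit, and any $P_3$ through $v$ either uses two of $v$'s neighbors (when $v$ is a midpoint) or continues from $u_2$ or $u_3$ (when $v$ is an endpoint, as $u_1$ is pendant), so it meets $\{u_2,u_3\}\subseteq C'$; meanwhile every $P_3$ avoiding $v$ retains whatever vertex of $C\setminus\{v\}$ covered it. Thus $C'$ is a minimum cover with $C'\cap\{u_1,v\}=\emptyset$, giving the first alternative. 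The crux of the whole argument is exactly this verification that removing $v$ from the solution uncovers nothing once $u_2,u_3$ are both taken, which is precisely where the pendant status of $u_1$ and the fact that $v$ has no neighbors beyond $u_1,u_2,u_3$ are essential.
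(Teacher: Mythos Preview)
Your reading of the statement (with $\cap$ in place of the evident typo $\cup$) is the right one, and your proof is correct. The paper itself omits the proof of this lemma, deferring it to the full version, so there is nothing to compare against directly. Your argument is the natural one: reduce to $u_1\notin C$ using that a pendant can always be exchanged for its neighbor, then split on whether $v\in C$, using minimality to rule out $\{v,u_2,u_3\}\subseteq C$ and a local swap $v\mapsto u_3$ in the last subcase. The key structural fact you isolate---that every $P_3$ through $v$ already meets $\{u_2,u_3\}$ because $u_1$ is a pendant and $N(v)=\{u_1,u_2,u_3\}$---is exactly what justifies both the minimality contradiction and the swap, and it is also what makes Branching Rule~(B4) sound. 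One cosmetic remark: in your first paragraph you list $vu_2u_3$ and $vu_3u_2$ as distinct $P_3$'s, but they are the same path; this is harmless.
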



\medskip
\noindent\textbf{Branching rule (B4):} \emph{ Let $v$ be a degree-3 vertex with a degree-1 neighbor $u_1$ and two adjacent neighbors $u_2$ and $u_3$.
Branch on $v$ to generate two instances by either\\
(i) deleting $N[\{u_1,v\}]$ from the graph, including $\{u_2,u_3\}$ to the solution set, and decreasing $k$ by 2, or\\
(ii) deleting $N[\{u_2,u_3\}]\cup \{u_1\}$ from the graph, including $N(\{u_2,u_3\})$ to the solution set, and decreasing $k$ by $|N(\{u_2,u_3\})|$.
}

\subsection{The Algorithm}
We will use ${\tt P3VC}(G,k)$ to denote our parameterized algorithm.
The algorithm contains 7 steps. When we execute one step, we assume that all previous steps are not applicable anymore on the current graph. We will analyze each step after describing it.

\begin{step}[Trivial cases]\label{step-trivial}
If $k\leq 0$ or the graph is an empty graph, then return the result directly. If the graph has a component of maximum degree 2, find a minimum 3-path vertex cover $S$ of it directly, delete this component from the graph, and decrease $k$ by the size of $S$.
\end{step}

%
%
%

After Step~\ref{step-trivial}, each component of the graph contains at least four vertices.
A degree-1 vertex $v$ is called a \emph{tail} if its neighbor $u$ is a degree-2 vertex.
Let $v$ be a tail, $u$ be the degree-2 neighbor of $v$, and $w$ be the other neighbor of $u$.
We show that there is a minimum 3-path vertex cover containing $w$ but not containing any of $u$ and $v$.
At most one of $u$ and $v$ is contained in any minimum 3-path vertex cover $C$, otherwise $C \cup \{w\} \setminus \{u,v\}$ would be a smaller 3-path vertex cover.
If none of $u$ and $v$ is in a  minimum 3-path vertex cover $C$, then $w$ must be in $C$ to cover the 3-path $uvw$ and then $C$ is a claimed  minimum 3-path vertex cover.
If exactly one of $u$ and $v$ is contained in a minimum 3-path vertex cover $C$, then $C'=C \cup \{w\} \setminus \{u,v\}$ is a claimed minimum 3-path vertex cover.

\begin{step}[Tails] \label{step_tail}
If there is a degree-1 vertex $v$ with a degree-2 neighbor $u$, then return
${\tt p3vc}(G\setminus N[\{v,u\}],k-1)$.
\end{step}


\begin{step}[Dominated vertices of degree $\ge 3$] \label{step_dominate}
If there is a vertex $v$ of degree $\geq 3$ dominated by $u$, then branch on $v$ with Rule~(B2) to generate two branches
$${\tt p3vc}(G \setminus \{v\},k-1) \quad \mbox{and} \quad {\tt p3vc}(G \setminus N[\{v,u\}],k-|N(\{v,u\})|).$$
\end{step}
Lemma~\ref{lem_dominate} guarantees the correctness of this step.
Note that $|N(\{v,u\})|= d(v) - 1$. This step gives a recurrence
\begin{equation}
T(k) \le T(k - 1) + T(k - (d(v)-1)) + 1,
\end{equation}
where $d(v) \ge 3$. For the worst case that $d(v)=3$, the branching factor of it is 1.6181.

A degree-1 vertex with a degree-1 neighbor will be handled in Step~\ref{step-trivial},
a degree-1 vertex with a degree-2 neighbor will be handled in Step~\ref{step_tail}, and
a degree-1 vertex with a neighbor of degree $\geq 3$ will be handled in Step~\ref{step_dominate}.
So after Step~\ref{step_dominate}, the graph has no vertex of degree $\leq 1$.
Next we consider degree$\ge 4$ vertices.

\begin{step}[Vertices of degree $\ge 4$ with satellites] \label{step_sattellite}
If there is a vertex $v$ of $d(v) \ge 4$ having a satellite, then branch on $v$ with Rule (B3) to generate $d(v)+1$ branches
$$ {\tt p3vc}(G \setminus \{v\},k-1)\quad \mbox{and} \quad
{\tt p3vc}(G\setminus N[\{v,u\}],k-|N(\{v,u\})|)\mbox{~for each $u\in N(v)$}.$$
\end{step}

The correctness of this step is guaranteed by lemma~\ref{lem_sat}.
Note that there is no dominated vertex after Step~\ref{step_dominate}. Each neighbor $u$ of $v$ is
adjacent to at least one vertex in $N_2(v)$ and then $|N(\{v,u\})|\geq d(v)$.

This step gives a recurrence
\begin{equation}
T(k) \le T(k - 1) + d(v) \cdot T(k - d(v)) + 1,
\end{equation}
where $d(v)\geq 4$.
For the worst case that $d(v) = 4$, the branching factor of it is 1.7485.

After Step~\ref{step_sattellite}, if there is still a vertex of degree $\geq 4$, we use
the following branching rule. Note that now each neighbor $u$ of $v$ is
adjacent to at least two vertices in $N_2(v)$ and then $|N(\{v,u\})|\geq d(v)+1$.

\begin{step}[Normal vertices of degree $\ge 4$ ] \label{step_normal}
If there is a vertex $v$ of $d(v) \ge 4$, then branch on $v$ with Rule (B1) to generate $d(v) + 2$ branches
\[
\begin{split}
{\tt p3vc}(G \setminus \{v\},k-1), &\quad {\tt p3vc}(G\setminus N[v],k-|N(v)|)\\
  \mbox{and}  & \quad {\tt p3vc}(G\setminus N[\{v,u\}],k-|N(\{v,u\})|)\mbox{~for each $u\in N(v)$}.
\end{split}
\]

\end{step}

Since $|N(\{v,u\})|\geq d(v)+1$, this step gives a recurrence
\begin{equation}
T(k) \le T(k - 1) + T(k - d(v)) + d(v) \cdot T(k - (d(v)+1)) + 1,
\end{equation}
which $d(v)\geq 4$.
For the worst case that $d(v) = 4$, the branching factor of it is 1.6930.

\medskip
After Step~\ref{step_normal}, the graph has only degree-2 and degree-3 vertices.
We first consider degree-2 vertices.

A path $u_0u_1u_2u_3$ of four vertices is called a \emph{chain} if the first vertex $u_0$ is of degree $\geq 3$ and the two middle vertices are of degree 2.
Note that there is no chain with $u_0=u_3$ after Step~\ref{step_dominate}. So when we discuss a chain we always assume
that $u_0\neq u_3$.
A chain can be found in linear time if it exists. In a chain $u_0u_1u_2u_3$, $u_2$ is a satellite of $u_0$ with a parent $u_1$.


\begin{step}[Chains] \label{step_chain}
If there is a chain $u_0u_1u_2u_3$, then branch on $u_0$ with Rule~(B3). In the branch where $u_0$ is deleted and
included to the solution set, $u_1$ becomes a tail and we further handle the tail as we do in Step~\ref{step_tail}.

We get the following branches
\[
\begin{split}
  &\quad {\tt p3vc}(G \setminus N[\{u_1,u_2\}],k-2)\\
  \mbox{and}  &\quad {\tt p3vc}(G\setminus N[\{u_0,u\}],k-|N(\{u_0,u\})|) \mbox{~for each $u\in N(u_0)$}.
\end{split}
\]
\end{step}

Note that $|N(\{u_0,u\})|\geq d(u_0)$ since there is no dominated vertex. We get a recurrence
$$
T(k) \le T(k - 2) + d(u_0) \cdot T(k - d(u_0)) + 1,
$$
where $d(u_0) \ge 3$. For the worst case that $d(u_0) = 3$, the branching factor of it is 1.6717.

\medskip
After Step~\ref{step_chain}, each degree-2 vertex must have two nonadjacent degree-3 vertices.
Note that no degree-2 is in a triangle if there is no dominated vertex.

\begin{step}[Degree-2 vertices with a neighbor in a triangle] \label{step_special1}
If there is a degree-2 vertex $v$ with $N(v)=\{u,w\}$ such that a neighbor $u$ of it is in a triangle $uu_1u_2$,
 then branch on $w$ with Rule (B1) and then in the branch $w$ is deleted and included in the solution set further branch on $u$ with Rule (B4). We get the following branches
 \[
\begin{split}
{\tt p3vc}(G\setminus N[\{u,v\}],k-|N(\{u,v\})|),&\\
{\tt p3vc}(G \setminus N[\{u_1, u_2\}]\cup \{u,w\}, k - |N(\{u_1, u_2\})\cup \{w\}|),&~ \mbox{and} \\
{\tt p3vc}(G \setminus N[\{w,u'\}],k-|N(\{w,u'\})|) \mbox{~for each $u'\in N(w)$}.&
\end{split}
\]
\end{step}

There two neighbors $u$ and $w$ of $v$ are degree-3 vertices.
 Since there is no dominated vertex, for any edge $v_1v_2$ it holds $|N(\{v_1,v_2\})|\geq \min \{d(v_1), d(v_2)\}$.
 We know that $|N(\{u,v\})|\geq d(u)=3$, $|N(\{u_1, u_2\})\cup \{w\}|\geq |N(\{u_1, u_2\})|\geq 3$ (since no degree-2 vertex is in a triangle) and $|N(\{w,u'\})|\geq d(w)$ for each $u'\in N(w)$.
We get the following recurrence
$$
T(k) \le T(k - 3) + T(k - 3) + 3 \cdot T(k - 3) + 1.
$$
The branching factor of it is 1.7100.

\medskip
After Step~\ref{step_special1}, no degree-3 vertex in a triangle is adjacent to a degree-2 vertex.

\begin{step}[Degree-2 vertices $v$ with a degree-3 vertex in $N_2(v)$] \label{step_special2}
If there is a degree-2 vertex $v$ such that at least one of its neighbors $u$ and $w$, say $u$, has a degree-3 neighbor $u_1$, then branch on $u$ with  Rule (B1) and in the branch where $u$ is deleted and included to the solution set, branch on $w$ with Rule (B2).
We get the branches
\[
\begin{split}
{\tt p3vc}(G \setminus \{u, v, w\}, k - 2),~ {\tt p3vc}(G\setminus N[\{w,v\}],k-|N(\{w,v\})|),\\
~ \mbox{and} ~ {\tt p3vc}(G \setminus N[\{u,u'\}],k-|N(\{u,u'\})|) \mbox{~for each $u'\in N(u)$.}
\end{split}
\]
\end{step}
Note that $d(u)=d(w)=3$.
It holds $|N(\{w,v\})|\geq d(w)=3$ and $|N(\{u,u'\})|\geq d(u)=3$ for $u'\in N(u)$.
Furthermore, we have that $|N(\{u,u_1\})|\geq 4$ because $u$ and $u_1$ are degree-3 vertices not in any triangle.
We get the following recurrence
$$
T(k) \le T(k - 2) + T(k - 3) + 2 \cdot T(k - 3) + T(k - 4).
$$
The branching factor of it is 1.7456.

\begin{lemma} \label{correct}
After Step~\ref{step_special2}, if the graph is not an empty graph, then each component of the graph is
either a 3-regular graph or a bipartite graph with one side of degree-2 vertices and one side of degree-3 vertices.
\end{lemma}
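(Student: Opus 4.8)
The plan is to split the argument according to whether a component contains a degree-2 vertex. First I would record that after Step~\ref{step_normal} every vertex of the graph has degree $2$ or $3$. Hence a component in which every vertex has degree $3$ is immediately $3$-regular, and the whole difficulty lies in showing that a component containing at least one degree-2 vertex has the claimed bipartite shape.

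Next I would isolate two local facts. From the paragraph following Step~\ref{step_chain}, every degree-2 vertex has both of its neighbors of degree $3$. From the fact that Step~\ref{step_special2} no longer applies, I would deduce that every degree-3 vertex $u$ adjacent to some degree-2 vertex $v$ has \emph{all} of its neighbors of degree $2$: otherwise a degree-3 neighbor of $u$ would witness the hypothesis of Step~\ref{step_special2} for $v$, so the step would still fire. Thus each degree-2 vertex sees only degree-3 vertices, and each of those particular degree-3 vertices sees only degree-2 vertices.

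With these facts in hand I would let $S$ be the set consisting of all degree-2 vertices together with all degree-3 vertices adjacent to a degree-2 vertex, and prove that $S$ is closed under taking neighbors: a degree-2 vertex of $S$ only reaches degree-3 vertices that are adjacent to a degree-2 vertex (namely itself), which lie in $S$ by definition, while a degree-3 vertex of $S$ only reaches degree-2 vertices by the second fact. A vertex set closed under adjacency is a union of connected components, so any component meeting the degree-2 vertices is contained in $S$. Inside such a component every edge joins a degree-2 vertex to a degree-3 vertex and no edge lies within either class, which is exactly the asserted structure of a bipartite graph with one side of degree-2 vertices and one side of degree-3 vertices; components disjoint from the degree-2 vertices consist solely of degree-3 vertices and are therefore $3$-regular.

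The step I expect to be the only delicate point is the extraction of the second local fact from the stopping condition of Step~\ref{step_special2}. I must make sure that a degree-3 neighbor of $u$ cannot secretly coincide with the other neighbor $w$ of $v$; this is guaranteed because after Step~\ref{step_special1} the two degree-3 neighbors of a degree-2 vertex are nonadjacent, so no triangle on $u$ can hide such a neighbor. Once this corner case is dismissed, the closure argument and the resulting bipartition are routine.
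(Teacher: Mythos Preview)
Your argument is correct. The paper omits the proof of this lemma in the extended abstract, so there is no reference proof to compare against, but your line of reasoning is the natural one and matches the intermediate facts the paper itself records between the steps: after Step~\ref{step_normal} only degrees $2$ and $3$ remain, after Step~\ref{step_chain} every degree-$2$ vertex has two nonadjacent degree-$3$ neighbors, and the stopping condition of Step~\ref{step_special2} forces every degree-$3$ vertex adjacent to a degree-$2$ vertex to have only degree-$2$ neighbors. The closure argument you give then yields the claimed dichotomy.

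One minor remark: the ``delicate point'' you flag is less delicate than you suggest. The hypothesis of Step~\ref{step_special2} only asks that some neighbor $u$ of the degree-$2$ vertex $v$ have \emph{any} degree-$3$ neighbor $u_1$; it does not exclude $u_1=w$. So even without invoking the nonadjacency of $u$ and $w$, the stopping condition already gives that $u$ has no degree-$3$ neighbor at all. Your appeal to the nonadjacency of $u$ and $w$ (recorded after Step~\ref{step_chain}) is not wrong, just unnecessary for this particular deduction; it is, however, exactly what guarantees that no edge runs within the degree-$3$ side of the bipartition, so it is still needed elsewhere in your proof.
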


\begin{lemma} \label{bipartite}
Let $G=(V_1\cup V_2,E)$ be a bipartite graph such that all vertices in $V_1$ are of degree 2 and all vertices in $V_2$ are of degree 3.
The set $V_1$ is a minimum 3-path vertex cover of $G$.
\end{lemma}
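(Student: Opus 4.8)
The plan is to argue through the complementary viewpoint: a set $C$ is a $3$-path vertex cover precisely when $G\setminus C$ contains no $P_3$, i.e.\ when $G\setminus C$ has maximum degree at most $1$, so minimizing $|C|$ is the same as maximizing $|V\setminus C|$ among the induced subgraphs of maximum degree $\le 1$. Since $G$ is bipartite with every edge joining the two sides, each side is an independent set; in particular $G\setminus V_1=G[V_2]$ is edgeless and hence contains no $P_3$, so $V_1$ is a $3$-path vertex cover. Taking $V_1$ to be the degree-$3$ side singled out by the lemma and $V_2$ its degree-$2$ complement, double counting the edges across the bipartition gives $3|V_1|=|E|=2|V_2|$, so $V_1$ is the smaller side; the content of the lemma is then the matching lower bound that every cover has at least $|V_1|$ vertices.

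For that lower bound I would take an arbitrary cover $C$ and set $D=V\setminus C$, so that $G[D]$ has maximum degree at most $1$. Partition the degree-$3$ vertices according to whether they lie in $C$ or in $D$. Every degree-$3$ vertex that lies in $D$ has at most one neighbour inside $D$, hence at least two of its three neighbours lie in $C$; since all of its neighbours are degree-$2$ vertices, these edges all land in $C\cap V_2$. Counting the edges from $D\cap V_1$ to $C\cap V_2$ from both endpoints gives $2\,|D\cap V_1|\le 2\,|C\cap V_2|$, that is $|D\cap V_1|\le |C\cap V_2|$. Combining this with the identity $|C\cap V_1|+|D\cap V_1|=|V_1|$ yields
\[
|C|=|C\cap V_1|+|C\cap V_2|\ \ge\ |C\cap V_1|+|D\cap V_1|=|V_1|,
\]
so every $3$-path vertex cover has at least $|V_1|$ vertices, and $V_1$ itself attains this bound.

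The main obstacle is exactly this lower bound, and the delicate point is to route the edge count through the degree-$3$ vertices rather than the degree-$2$ ones: a count based only on the degree-$2$ side loses a factor and is too weak, whereas the max-degree-$\le 1$ condition is strong enough to force each retained degree-$3$ vertex to surrender at least two of its edges to the cover, which is what makes the inequality tight. I would finish by checking the extremal case $D=V_2$ (equivalently $C=V_1$): there every degree-$3$ vertex lies in $C$, both displayed inequalities hold with equality, and this certifies simultaneously that $V_1$ is a cover and that no cover can be smaller, so $V_1$ is a minimum $3$-path vertex cover.
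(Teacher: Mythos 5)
Your counting argument is mathematically sound, but it does not prove the statement as printed --- it proves the statement with the two sides swapped, and that swap is in fact forced, because the lemma as printed is false. In the paper's labeling $V_1$ is the degree-2 side and $V_2$ the degree-3 side, so double counting the edges gives $2|V_1|=|E|=3|V_2|$, i.e.\ $|V_1|=\tfrac{3}{2}|V_2|$; since deleting the degree-3 side $V_2$ leaves the independent set $V_1$, the set $V_2$ is itself a 3-path vertex cover strictly smaller than $V_1$, so $V_1$ can never be a \emph{minimum} cover. Concretely, in $K_{2,3}$ the two degree-3 vertices form a cover of size $2$ while $|V_1|=3$. You wrote ``taking $V_1$ to be the degree-3 side singled out by the lemma,'' but the lemma singles out the degree-2 side: you silently relabeled, which amounts to correcting the statement rather than proving it, and a referee would expect you to flag that discrepancy explicitly. (The same transposition infects Step~\ref{step_bipartite}, which as printed decreases $k$ by $|V_1|$ but should decrease it by $|V_2|$; note also your aside that ``$V_1$ is the smaller side'' is only true under your swapped labels.)

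Once the statement is corrected to ``$V_2$ is a minimum 3-path vertex cover,'' your proof is complete and correct: the upper bound is the independence of the opposite side, and the lower bound is your edge count --- every degree-3 vertex surviving in $D=V\setminus C$ must send at least two of its three edges into $C\cap V_1$, while each degree-2 vertex absorbs at most two such edges, giving $|D\cap V_2|\le|C\cap V_1|$ and hence $|C|=|C\cap V_1|+|C\cap V_2|\ge |D\cap V_2|+|C\cap V_2|=|V_2|$. This version of the paper defers the proof to the full version, so there is no in-text proof to compare against, but your argument is the natural one for the corrected claim (an exact packing certificate is impossible here, since $|V_1|+|V_2|=\tfrac{5}{2}|V_2|<3|V_2|$, so a degree-counting lower bound of this kind is essentially the only route), and your observation that the count must be routed through the degree-3 vertices is precisely what makes the bound tight.
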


\begin{step}[Bipartite graphs] \label{step_bipartite}
If the graph has a component $H$ being a bipartite graph with one side $V_1$ of degree-2 vertices and one side $V_2$ of degree-3 vertices,
then return
${\tt p3vc}(G\setminus H,k-|V_1|)$.
\end{step}

\begin{step}[3-regular graphs] \label{step_regular}
If the graph is a 3-regular graph, pick up an arbitrary vertex $v$ and branch on it with Rule (B1).
\end{step}

Lemma~\ref{correct} shows that the above steps cover all the cases, which implies the correctness of the algorithm.
Note that all the branching operations except Step~\ref{step_regular} in the algorithm have a branching factor at most 1.7485.
We do not analyze the branching factor for Step~\ref{step_regular}, because this step will not exponentially increase the running time bound of our algorithm.
Any proper subgraph of a connected 3-regular graph is not a 3-regular graph.
For each connected component of a 3-regular graph, Step~\ref{step_regular} can be applied for at most one time and all other branching operations have a branching factor at most 1.7485.
Thus each connected component of a 3-regular graph can be solved in $O^*(1.7485^k)$ time. Before getting a connected component of a 3-regular graph, the algorithm
always branches with branching factors of at most 1.7485. Therefore,
\begin{theorem}
The 3-path vertex cover problem can be solved in $O^*(1.7485^k)$ time and polynomial space.
\end{theorem}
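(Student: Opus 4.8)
The plan is to treat \texttt{P3VC} as a branch-and-reduce procedure measured by $k$ and to bound the search-tree size by $O^*(1.7485^k)$, arguing correctness and running time separately. First I would settle correctness. The reduction steps are justified directly: the trivial cases and the tail reduction are argued in place (the latter in the paragraph preceding Step~\ref{step_tail}), and the bipartite reduction is Lemma~\ref{bipartite}. Every branching step exhaustively enumerates how the chosen vertex and its closed neighborhood meet an optimal solution, with the nontrivial cases validated by the rule lemmas: Rule~(B2) by Lemma~\ref{lem_dominate}, Rule~(B3) by Lemma~\ref{lem_sat}, and Rule~(B4) by Lemma~\ref{lem_triangle} (Rule~(B1) being self-evidently exhaustive). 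To guarantee that the steps handle every input, I would invoke Lemma~\ref{correct}: when none of Steps~\ref{step-trivial}--\ref{step_special2} applies, each remaining component is $3$-regular or bipartite of the described type, and these are precisely the inputs consumed by Steps~\ref{step_bipartite} and~\ref{step_regular}.

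For the running time I would simply read off the recurrences already derived below each branching step. Steps~\ref{step_dominate}, \ref{step_sattellite}, \ref{step_normal}, \ref{step_chain}, \ref{step_special1}, and~\ref{step_special2} all yield branching factors bounded, in their worst cases, by $1.7485$, the extremal value being attained at Step~\ref{step_sattellite} with $d(v)=4$. The non-branching steps only shrink $k$ in polynomial time without increasing the measure. Thus if these were the only branchings, a standard induction on $k$ would give a search tree of size $O^*(1.7485^k)$ and we would be done.

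The one step that resists this uniform treatment --- and the part I expect to be the crux --- is Step~\ref{step_regular}, since Rule~(B1) applied to a $3$-regular graph produces a recurrence whose factor exceeds $1.7485$. To control it I would exploit two facts. The first is that a minimum $3$-path vertex cover decomposes additively over connected components, so I may solve each connected component independently and sum the results rather than carrying the remaining components through one another's branches; this is what prevents the regular branchings from compounding multiplicatively. The second is the structural observation stated in the text: no proper induced subgraph of a connected $3$-regular graph is $3$-regular, because deleting any nonempty proper vertex subset of a connected graph leaves a vertex adjacent to the deleted set and hence of degree below $3$. Consequently, once Step~\ref{step_regular} is applied to a connected $3$-regular component, that component never becomes $3$-regular again, so Step~\ref{step_regular} fires at most once on it and every subsequent branching there has factor at most $1.7485$.

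Putting these together, each connected $3$-regular component with budget $k_i$ is solved by a single constant-size branching followed only by branchings of factor at most $1.7485$, hence in $O^*(1.7485^{k_i})$ time; since there are at most linearly many components and $k_i \le k$, summing yields $O^*(1.7485^k)$ overall, the regular branchings contributing only a polynomial overhead absorbed by the $O^*$ notation. The polynomial space bound then follows because the whole computation is a depth-first traversal storing only the current root-to-node path together with polynomial bookkeeping. Combining the exhaustiveness from Lemma~\ref{correct}, the uniform $1.7485$ bound on the ordinary branchings, and this at-most-once control of the regular step establishes the theorem.
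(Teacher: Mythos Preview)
Your proposal is correct and follows essentially the same line as the paper: correctness via Lemma~\ref{correct} and the rule lemmas, a uniform $1.7485$ bound on all branchings except Step~\ref{step_regular}, and the observation that a connected $3$-regular graph has no $3$-regular proper subgraph so the regular step fires at most once per component. If anything, your explicit component-wise decomposition makes the amortization of Step~\ref{step_regular} cleaner than the paper's own one-paragraph justification.
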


\section{Kernelization}
In this section, we show that the parameterized 3-path vertex cover problem allows a kernel with at most $5k$ vertices.

\subsection{Graph decompositions}
The kernelization algorithm is based on a vertex decomposition of the graph, called \emph{good decomposition},
which can be regarded as an extension of the crown decomposition~\cite{AbuKhzam:kernelization}.
 Based on a good decomposition we show that an optimal solution  to a special local part of the graph is contained in an optimal solution to the whole graph. Thus, once we find a good decomposition, we may be able to reduce the graph by
adding some vertices to the solution set directly. We only need to find good decompositions in polynomial time in
graphs with a large size to get problem kernels.
Some previous rules to kernels for the parameterized 3-path packing problem~\cite{Chen:packing,Fermau:packinglength2,Wang:7k} are adopted here to find good decompositions in an effective way.

\begin{definition}
A \emph{good decomposition} of a graph $G=(V,E)$ is a decomposition $(I, C, R)$ of the vertex set $V$ such that
\begin{enumerate}
\item the induced subgraph $G[I]$ has maximum degree at most 1;
\item the induced subgraph $G[I\cup C]$ has a $P_3$-packing of size $|C|$;
\item no vertex in $I$ is adjacent to a vertex in $R$.
\end{enumerate}
\end{definition}

\begin{lemma} \label{lemma_lemma1}
A graph $G$ that admits a good decomposition $(I,C,R)$ has a $P_3$-vertex cover (resp., $P_3$-packing) of size $k$ if and only if $G[R]$ has a $P_3$-vertex cover (resp., $P_3$-packing) of size $k-|C|$.
\end{lemma}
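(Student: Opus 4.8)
The plan is to prove all four implications (the two directions for each of the vertex-cover and packing versions) from two structural observations that follow immediately from the definition of a good decomposition. The first observation is that $G[I]$ contains no $P_3$ at all, since by condition 1 it has maximum degree at most $1$. The second is a separation property: because condition 3 forbids edges between $I$ and $R$, any connected subgraph of $G$ whose vertices avoid $C$ must lie entirely inside $I$ or entirely inside $R$. Combining the two, I get the key fact that I will use repeatedly: every $P_3$ of $G$ either contains a vertex of $C$ or lies entirely in $R$. Indeed, a $P_3$ avoiding $C$ must, by separation, sit inside $I$ or inside $R$, and the former is impossible by the first observation. Throughout, ``of size $k$'' is read in the usual way, namely ``of size at most $k$'' for a cover and ``of size at least $k$'' for a packing.

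For the two ``if'' directions I construct a size-$k$ solution of $G$ from a size-$(k-|C|)$ solution of $G[R]$. In the cover case, given a $P_3$-vertex cover $C_R$ of $G[R]$, I claim $C\cup C_R$ is a $P_3$-vertex cover of $G$; its size is at most $|C|+(k-|C|)=k$ because $C$ and $C_R\subseteq R$ are disjoint. To verify it I delete $C\cup C_R$ and apply the key fact: every surviving $P_3$ would have to lie in $R\setminus C_R$, contradicting that $C_R$ covers $G[R]$. In the packing case I take the $P_3$-packing $\mathcal{P}$ of $G[I\cup C]$ of size $|C|$ guaranteed by condition 2 together with a packing of $G[R]$ of size $k-|C|$; these use vertices from $I\cup C$ and from $R$ respectively, so they are automatically vertex-disjoint and their union has size $k$.

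For the two ``only if'' directions I restrict a solution of $G$ to $R$. In the cover case I use condition 2 crucially: the packing $\mathcal{P}$ consists of $|C|$ vertex-disjoint $P_3$'s all contained in $I\cup C$, so any $P_3$-vertex cover $S$ of $G$ must spend at least $|C|$ vertices inside $I\cup C$, whence $|S\cap R|\le k-|C|$; moreover $S\cap R$ covers $G[R]$, since every $P_3$ of $G[R]$ is a $P_3$ of $G$ and is hit by $S$ at a vertex that necessarily lies in $R$. In the packing case I classify the paths of a size-$k$ packing $\mathcal{Q}$ of $G$ by the key fact: each either meets $C$ or lies in $R$. Since the paths are vertex-disjoint, at most $|C|$ of them can meet $C$, so at least $k-|C|$ of them lie entirely in $R$ and form the desired packing of $G[R]$.

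The separation and disjointness bookkeeping is routine; the one place that really needs the full strength of the definition is the key fact underlying the packing ``only if'' direction, and in particular ruling out a $P_3$ that uses a vertex of $I$ but no vertex of $C$. This is exactly where conditions 1 and 3 must be used in tandem: condition 3 forces such a path to stay within $I$ (any step out of $I$ that does not pass through $C$ is forbidden), and then condition 1 forbids a $P_3$ inside the maximum-degree-$1$ graph $G[I]$. I therefore expect this observation, rather than any of the counting, to be the crux of the argument.
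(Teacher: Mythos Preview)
Your argument is correct. The paper omits its proof of this lemma in the extended abstract, but your approach is the natural one and almost certainly what the authors have in mind: the key structural fact that any $P_3$ avoiding $C$ must lie entirely in $R$ (from conditions~1 and~3), together with condition~2 supplying the $|C|$ disjoint paths in $I\cup C$, is exactly what is needed, and you use each condition in the right place.
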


Lemma~\ref{lemma_lemma1} provides a way to reduce instances of the parameterized 3-path vertex cover problem
based on a good decomposition  $(I,C,R)$ of the graph: deleting $I\cup C$ from the graph and adding $C$ to the solution set. Here arise a question: how to effectively find good decompositions? It is strongly related to the
quality of our kernelization algorithm. The kernel size will be smaller if we can polynomially compute a good decomposition in a smaller graph.
Recall that we use $Comp(G')$ and $Comp_i(G')$ to denote the number of components and number of components with $i$ vertices in a graph $G'$, respectively.
For a vertex subset $A$ that induces a graph of maximum degree at most $1$ and $j=\{1,2\}$,
we use  $N_j(A)\subseteq N(A)$ to denote the set of vertices in $N(A)$ adjacent to at least one
component of size $j$ in $G[A]$,
and $N'_2(A)\subseteq N_2(A)$ be the set of vertices in $N(A)$ adjacent to at least one
component of size $2$ but no component of size 1 in $G[A]$.
We will use the following lemma to find good decompositions, which was also used in~\cite{Chang:5kkernel} to design kernel algorithms
for the 3-path packing problem.

\begin{lemma} \label{lemma_lemma2}
Let $A$ be a vertex subset of a graph $G$ such that each connected component of the induced graph $G[A]$ has at most 2 vertices.
If
\begin{eqnarray}Comp(G[A])> 2|N(A)|-|N'_2(A)|,\label{imp1}
\end{eqnarray}
then there is a good decomposition  $(I,C,R)$ of $G$ such that $\emptyset \neq I\subseteq A$ and $C\subseteq N(A)$.
Furthermore, the good decomposition $(I,C,R)$ together with
a $P_3$-packing of size $|C|$ in $G[I\cup C]$
can be computed in $O(\sqrt{n}m)$ time.
\end{lemma}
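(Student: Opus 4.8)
The plan is to prove the lemma by a refined crown/expansion argument implemented through a single bipartite matching computation. First I would record the structural fact that makes everything possible: since every component of $G[A]$ has at most two vertices, $G[A]$ contains no $P_3$, so in any $P_3$-packing of a subgraph $G[I\cup C]$ with $I\subseteq A$ and $C\subseteq N(A)$ each $3$-path must use at least one $C$-vertex. A packing of size $|C|$ therefore uses each $C$-vertex in exactly one path, and every path consists of one $C$-vertex together with two $A$-vertices. A head vertex $c$ can be served in exactly two canonical ways: by a single adjacent size-$2$ component $\{a,a'\}$ (the path $c\,a\,a'$), or by two distinct adjacent components contributing one vertex each (the path $a\,c\,a'$). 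This dichotomy, and the special status of vertices in $N'_2(A)$ that can only be served by the first way, is what the quantity $2|N(A)|-|N'_2(A)|$ counts.

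Next I would build an auxiliary bipartite graph $H$ whose right side is the set of components of $G[A]$ and whose left side encodes the service demand of each neighbour. Each $c\in N'_2(A)$ is adjacent only to size-$2$ components and can be served by one of them, so I give it a single left node joined to its (size-$2$) neighbouring components. Each $c\in N(A)\setminus N'_2(A)$ is given two twin left nodes, each joined to \emph{all} components adjacent to $c$; serving such a $c$ then corresponds to matching both twins to two distinct components and building the path $a\,c\,a'$, or, if one of those components has size $2$, using it alone. The left side of $H$ thus has exactly $2|N(A)\setminus N'_2(A)|+|N'_2(A)|=2|N(A)|-|N'_2(A)|$ nodes. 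I would compute a maximum matching $M$ of $H$; its size is at most the number of left nodes, so by hypothesis~(\ref{imp1}) $M$ cannot saturate the right side, and at least one component is left unmatched.

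With a non-saturating $M$ in hand I would extract the crown in the standard K\"onig/alternating-path manner: let $U\neq\emptyset$ be the set of unmatched components, let $I$ be the set of components reachable from $U$ by $M$-alternating paths, and let $S$ be the set of left nodes reachable from $U$. Maximality of $M$ forces every node of $S$ to be matched into $I$, and forces every $H$-neighbour of a component in $I$ to lie in $S$; translating $S$ back to $N(A)$ gives the head $C$, and $I$ (a union of whole components of $G[A]$) satisfies $N_G(I)\subseteq I\cup C$, which is condition~3 of a good decomposition, while condition~1 is immediate from $I\subseteq A$, with $I\neq\emptyset$ coming from $U\neq\emptyset$. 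The decisive point, and the place I expect the real work, is turning $M$ into a $P_3$-packing of size $|C|$: for $c\in C\cap N'_2(A)$ its single twin is matched to a size-$2$ component, giving the path $c\,a\,a'$, whereas for $c\in C\setminus N'_2(A)$ I must rule out the failure mode in which only one twin lies in $S$ and is matched to a size-$1$ component. This cannot happen because the two twins of $c$ have identical neighbourhoods in $H$: if one twin is reachable then its matched component is reachable and, being adjacent to the other twin, forces that twin to be matched (else an augmenting path contradicts maximality) and reachable as well. Hence both twins of each ordinary $c\in C$ lie in $S$, are matched to two distinct components of $I$, and yield a valid path; since every component is matched to at most one twin, the resulting $|C|$ paths are vertex-disjoint and lie in $G[I\cup C]$, establishing condition~2.

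Finally, for the running time I would note that $H$ has $O(n)$ vertices and $O(m)$ edges, so a maximum matching is computable by the Hopcroft--Karp algorithm in $O(\sqrt{n}\,m)$ time, after which the alternating reachability, the crown $(I,C,R)$, and the associated packing are all read off by a single graph search in $O(n+m)$ time. The main obstacle is exactly the correspondence between matchings of $H$ and saturating $P_3$-packings: the asymmetry between size-$1$ and size-$2$ components must be encoded so that the counting bound is precisely $2|N(A)|-|N'_2(A)|$ and, simultaneously, the extracted crown is guaranteed to admit a packing of size $|C|$; the twin-togetherness property is the key that makes both hold at once.
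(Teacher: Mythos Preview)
The paper does not actually prove this lemma in the extended abstract; it notes that the statement was also used in~\cite{Chang:5kkernel} and defers all omitted proofs to the full version. There is therefore no in-paper argument against which to compare your proposal.

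Taken on its own merits, your argument is correct and complete. The twin construction for vertices of $N(A)\setminus N'_2(A)$ versus a single copy for vertices of $N'_2(A)$ gives a left side of size exactly $2|N(A)|-|N'_2(A)|$, so hypothesis~(\ref{imp1}) forces an unmatched component on the right. The identical-neighbourhood observation is the key step: once one twin of some $c$ enters the reachable set $S$, its matched partner $K_1$ is reachable, and the unmatched edge $K_1$--$t_2$ drags the other twin in as well (and forces it to be matched, else there is an augmenting path). This guarantees that every $c\in C\setminus N'_2(A)$ receives two distinct assigned components, while every $c\in C\cap N'_2(A)$ receives one size-$2$ component, so a $P_3$ through $c$ always exists; disjointness follows because each component of $G[A]$ is matched to at most one left node and hence assigned to at most one head. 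Condition~3 of a good decomposition is the usual K\"onig reachability fact that every $H$-neighbour of a reachable component lies in $S$, and the Hopcroft--Karp bound gives the stated running time. This is the standard crown/expansion technique one expects here and is almost certainly in the same spirit as the omitted proof.
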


By using Lemma~\ref{lemma_lemma2}, we can get a linear kernel for the parameterized 3-path vertex cover problem
quickly. We find an arbitrary maximal $P_3$-packing $S$ and let $A=V\setminus V(S)$. We assume that $S$ contains
less than $k$ 3-paths and then $|V(S)|< 3k$, otherwise the problem is solved directly.
Note that $|N(A)|\subseteq |V(S)|$. If $|A| >12 k$, then $Comp(G[A])\geq {\frac{|A|}{2}} >6 k>  2|V(S)|\geq 2|N(A)|$ and we reduce the instance by Lemma~\ref{lemma_lemma2}. So we can get a kernel of $15k$ vertices.
This bound can be improved by using a special case of Lemma~\ref{lemma_lemma2}.

For a vertex subset $A$ such that $G[A]$ has maximum degree at most 1.
Let $A_0$ be the set of degree-1 vertices in $G[A]$.
Note that  $Comp(G[A_0])=Comp_2(G[A])$ and $|N(A_0)|=|N_2(A_0)|=|N_2(A)|$.
By applying  Lemma~\ref{lemma_lemma2}
on $A_0$, we can get
\setcounter{theorem}{0}
\begin{cor} \label{lemma_corollary}
Let $A$ be a vertex subset of a graph $G$ such that each connected component of the induced graph $G[A]$ has at most 2 vertices.
Let $N_2(A)\subseteq N(A)$ be the set of vertices in $N(A)$ adjacent to at least one vertex in a component of size 2 in $G[A]$.
If
\begin{eqnarray}Comp_2(G[A])> |N_2(A)|,\label{imp2}
\end{eqnarray}
then there is a good decomposition  $(I,C,R)$ of $G$ such that $\emptyset \neq I\subseteq A$ and $C\subseteq N(A)$.
Furthermore, the good decomposition $(I,C,R)$ together with
a $P_3$-packing of size $|C|$ in $G[I\cup C]$
can be computed in $O(\sqrt{n}m)$ time.
\end{cor}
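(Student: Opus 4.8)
The plan is to derive this corollary directly from Lemma~\ref{lemma_lemma2} by applying that lemma not to $A$ itself but to a carefully chosen subset. First I would let $A_0$ denote the set of degree-$1$ vertices of $G[A]$, that is, the vertices lying in the size-$2$ components of $G[A]$. By construction $G[A_0]$ is exactly the union of these size-$2$ components, so every connected component of $G[A_0]$ has precisely two vertices; in particular $A_0$ satisfies the hypothesis of Lemma~\ref{lemma_lemma2}, namely that each component of the induced graph has at most two vertices.

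The heart of the argument is to translate the combinatorial quantities of Lemma~\ref{lemma_lemma2}, evaluated at $A_0$, into the quantities of the corollary, evaluated at $A$. I would establish three identities: $Comp(G[A_0]) = Comp_2(G[A])$, $N(A_0) = N_2(A)$, and $N'_2(A_0) = N(A_0)$. The first holds because the components of $G[A_0]$ are exactly the size-$2$ components of $G[A]$. For the second, since $A_0$ consists of complete size-$2$ components, no vertex of $A \setminus A_0$ is adjacent to $A_0$ within $G[A]$, so every neighbor of $A_0$ lies outside $A$; these are precisely the outside vertices touching a size-$2$ component, i.e.\ $N_2(A)$. The third identity is where care is needed: $N'_2$ counts neighbors hitting a size-$2$ component but no size-$1$ component, and since $G[A_0]$ contains no size-$1$ component at all, every vertex of $N(A_0)$ vacuously avoids size-$1$ components, whence $N'_2(A_0) = N_2(A_0) = N(A_0)$.

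With these identities in hand I would substitute into the premise of Lemma~\ref{lemma_lemma2}: the condition $Comp(G[A_0]) > 2|N(A_0)| - |N'_2(A_0)|$ becomes $Comp_2(G[A]) > 2|N_2(A)| - |N_2(A)| = |N_2(A)|$, which is exactly hypothesis~\eqref{imp2} of the corollary. Hence Lemma~\ref{lemma_lemma2} applies to $A_0$ and produces a good decomposition $(I,C,R)$ with $\emptyset \neq I \subseteq A_0 \subseteq A$ and $C \subseteq N(A_0) = N_2(A) \subseteq N(A)$, together with the claimed $P_3$-packing of size $|C|$ in $G[I\cup C]$, all within the stated $O(\sqrt{n}m)$ time; forming $A_0$ from $A$ costs only linear time and does not affect this bound.

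I expect the only genuine obstacle to be the bookkeeping in the second and third identities, in particular verifying that restricting attention to size-$2$ components collapses the refined neighborhood $N'_2$ onto the full neighborhood $N$, so that the factor of $2$ in the premise of Lemma~\ref{lemma_lemma2} is exactly cancelled to yield the sharper threshold $|N_2(A)|$. Everything else is a direct invocation of the already-established lemma.
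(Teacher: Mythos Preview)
Your proposal is correct and follows essentially the same approach as the paper: define $A_0$ as the set of degree-$1$ vertices of $G[A]$, observe that $Comp(G[A_0])=Comp_2(G[A])$ and $N(A_0)=N_2(A_0)=N'_2(A_0)=N_2(A)$, and apply Lemma~\ref{lemma_lemma2} to $A_0$. Your justification of the three identities is slightly more explicit than the paper's one-line remark preceding the corollary, but the argument is the same.
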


Note that $|A|=Comp_1(G[A])+2\cdot Comp_2(G[A])$. If $|A|> 9k$, then $Comp_1(G[A])+2\cdot Comp_2(G[A])=|A|>9k
>  3|V(S)|\geq 3|N(A)|\geq (2|N(A)|-|N'_2(A)|)+|N_2(A)|$ and at least one of (\ref{imp1}) and (\ref{imp2}) holds.
Then by using Lemma~\ref{lemma_lemma2} and Corollary~\ref{lemma_corollary}, we can get a kernel of size $9k+3k=12k$.
It is possible to bound $|N(A)|$ by $k$ and then to get a kernel of size $3k+3k=6k$.
To further improve the kernel size to $5k$, we need some sophisticated techniques and deep analyses on the graph structure.

\subsection{A $5k$ kernel}
In this section, we use ``crucial partitions'' to find good partitions.
A vertex partition $(A,B,Z)$ of a graph is called a \emph{crucial partition} if it satisfies \emph{Basic Conditions} and
\emph{Extended Conditions}.
Basic Conditions include the following four items:
\begin{enumerate}
\item[(B1)] $A$ induces a graph of degree at most 1;
\item[(B2)] $B$ is the vertex set of a $P_3$-packing $\mathcal{P}$;
\item[(B3)] No vertex in $A$ is adjacent to a vertex in $Z$;
\item[(B4)] $|Z|\leq 5\cdot \gamma(G[Z])$, where $\gamma(G[Z])$ is the size of a minimum $P_3VC$-set in the induced subgraph $G[Z]$.
\end{enumerate}
Before presenting the definition of Extended Conditions, we give some used definitions.
We use $\mathcal{P}_j$ to denote the collection of 3-paths in $\mathcal{P}$ having $j$ vertices adjacent to $A$-vertices
$(j = 0,1,2,3)$. Then  $\mathcal{P} = \mathcal{P}_0 \cup \mathcal{P}_1 \cup \mathcal{P}_2 \cup \mathcal{P}_3$.
We use $\mathcal{P}^1$ to denote the collection of 3-paths $L\in \mathcal{P}$ such that $|A(L)|=1$.
We also partition $\mathcal{P}_1\setminus \mathcal{P}^1$ into two parts: \\
let $\mathcal{P}_M \subseteq \mathcal{P}_1\setminus \mathcal{P}^1$ be the collection of 3-paths with the middle vertex adjacent to some $A$-vertices; \\
let $\mathcal{P}_L \subseteq \mathcal{P}_1\setminus \mathcal{P}^1$ be the collection of 3-paths $L_i$ such that $|A(L_i)|\geq 2$ and one ending vertex of $L_i$ is adjacent to some $A$-vertices.

A vertex in a 3-path in $\mathcal{P}$ is \emph{free} if it is not adjacent to any $A$-vertex.
A 3-path in $\mathcal{P}_0$ is \emph{bad} if it has at least two vertices adjacent to some
free-vertex in a 3-path in $\mathcal{P}_L$ and \emph{good} otherwise. A 3-path in $\mathcal{P}_L$ is \emph{bad} if it is adjacent to a bad 3-path in $\mathcal{P}_0$ and \emph{good} otherwise.

Extended Conditions include the following seven items:
\begin{enumerate}
\item[(E1)] For each 3-path $L_i\in \mathcal{P}\setminus \mathcal{P}^1$, at most one vertex in $L_i$ is adjacent to some vertex in $A$, i.e.,
$\mathcal{P}\setminus \mathcal{P}^1=\mathcal{P}_0\cup \mathcal{P}_1$;
\item[(E2)] No 3-path in $\mathcal{P}_M$ is adjacent to both of $A_0$-vertices and $A_1$-vertices;
\item[(E3)] No free-vertex in a 3-path in $\mathcal{P}_L$ is adjacent to a free-vertex in another 3-path in $\mathcal{P}_L$;
\item[(E4)] No free-vertex in a 3-path in $\mathcal{P}_L$ is adjacent to a free-vertex in a 3-path in $\mathcal{P}_M$;
\item[(E5)] Each 3-path in $\mathcal{P}^1$ has at most one vertex adjacent to a free-vertex in a 3-path in $\mathcal{P}_L$;
\item[(E6)] If a 3-path in $\mathcal{P}^0$ has at least two vertices adjacent to some free-vertex in a 3-path in $\mathcal{P}_L$, then all those free-vertices are from one 3-path in $\mathcal{P}_L$, i.e., each bad
    3-path in $\mathcal{P}^0$ is adjacent to free-vertices in only one bad 3-path in $\mathcal{P}_L$;
\item[(E7)] No free-vertex in a 3-path in $\mathcal{P}_L$ is adjacent to a vertex in $Z$.
\end{enumerate}

\begin{lemma}\label{hardpart}
A crucial partition of the vertex set of any given graph can be found in polynomial time.
\end{lemma}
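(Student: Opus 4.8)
The plan is to construct the partition by starting from an arbitrary maximal $P_3$-packing and then repairing the violated conditions through a bounded sequence of local refinements. First I would take a maximal $P_3$-packing $\mathcal{P}$, set $B=V(\mathcal{P})$, $A=V\setminus B$, and $Z=\emptyset$. Maximality of $\mathcal{P}$ guarantees that $G[A]$ contains no $P_3$ and hence has maximum degree at most $1$, so (B1) holds; (B2) holds by construction, (B3) holds vacuously because $Z=\emptyset$, and (B4) holds since $|Z|=0=5\gamma(G[Z])$. Thus the initial partition already satisfies all Basic Conditions, and the work is to enforce the seven Extended Conditions while keeping the Basic Conditions invariant.

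For each Extended Condition I would design a refinement rule that, whenever that condition is violated, modifies $(A,B,Z)$ to destroy one violation. The rules come in three flavors. \emph{Augmenting} rules apply when some 3-path of $\mathcal{P}$ meets $A$ richly enough to be recombined with nearby $A$-components into strictly more 3-paths; for example, when both ending vertices of a 3-path reach distinct $A$-edges, as in a violation of (E1), the path $xyz$ together with an $A$-edge $a_2a_2'$ and an $A$-vertex $a_1$ can be resplit into $a_1xy$ and $za_2a_2'$. Such a rule moves vertices from $A$ into $B$, strictly decreasing $|A|$. \emph{Rearranging} rules handle violations in which the packing size cannot grow: they swap a free vertex of a $\mathcal{P}_L$-path for an $A$-vertex, redistribute which vertex of a path carries its $A$-adjacency (to enforce (E2)--(E6)), and so on, keeping $|A|$ and $|B|$ fixed. \emph{Moving} rules relocate a small self-contained block of 3-paths into $Z$, but only when that block carries a $P_3$-packing certifying the factor-$5$ bound, so (B4) is preserved; moving strictly increases $|Z|$. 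The rules are arranged so that a fixpoint (no rule applicable) simultaneously satisfies (E1)--(E7), which gives correctness.

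The heart of the argument, and exactly the point where the earlier attempt of~\cite{Chang:5kkernel} breaks down (``the algorithm may not stop''), is showing that this process terminates after polynomially many steps rather than cycling. I would control it with a lexicographic potential $\mu=(|A|+|B|,\,|A|,\,\mu_3,\mu_4,\dots)$ to be minimized. A moving rule strictly decreases the first coordinate $|A|+|B|=|V|-|Z|$ and touches nothing earlier, so it fires at most $n$ times. An augmenting rule keeps $|A|+|B|$ fixed and strictly decreases the second coordinate $|A|$, so between two moves it fires at most $n$ times. A rearranging rule keeps both $|A|+|B|$ and $|A|$ fixed and strictly decreases one of the secondary structural counts $\mu_3,\mu_4,\dots$ (such as the number of 3-paths adjacent to two or more $A$-components, or the number of cross-class free-vertex adjacencies among $\mathcal{P}_L$, $\mathcal{P}_M$, $\mathcal{P}^1$, $\mathcal{P}_0$ and $Z$ forbidden by (E3)--(E7)). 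The delicate part is ordering the conditions, and hence the rules, so that repairing a later condition never resurrects a violation of an earlier one; verifying this dependency structure condition by condition is what makes the lexicographic measure genuinely monotone, and it is the main obstacle I expect.

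Finally, each rule is polynomial to test and to apply: detecting a violation amounts to scanning the $O(n)$ paths of $\mathcal{P}$ together with their neighborhoods into $A$ and $Z$, and executing a refinement touches only a constant number of 3-paths and $A$-components, plus, for a move into $Z$, one local $P_3$-packing computed as in Lemma~\ref{lemma_lemma2}. Since $\mu$ is polynomially bounded and strictly decreasing under every rule, only polynomially many refinements occur, and the procedure halts with a partition satisfying (B1)--(B4) and (E1)--(E7), that is, a crucial partition, in polynomial time.
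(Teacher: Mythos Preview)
The paper omits the proof of this lemma (it is deferred to the full version), so there is no in-paper argument to compare against line by line. What can be inferred is consistent with your plan: the paper declares a dedicated theorem environment \texttt{Refinement Rule}, and it singles out the termination failure of~\cite{Chang:5kkernel} as the issue to overcome, so an iterative refinement of a maximal $P_3$-packing governed by a carefully chosen potential is exactly the intended shape.

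That said, your proposal is a plan rather than a proof, and the gap is precisely the one you flag yourself. The lexicographic potential $\mu$ is only loosely specified: the secondary coordinates $\mu_3,\mu_4,\dots$ are named by intent (``number of cross-class free-vertex adjacencies'') but never defined, and the central monotonicity claim --- that repairing a later Extended Condition never resurrects an earlier one --- is asserted, not verified. Concretely, a rearranging rule that swaps a free vertex of a $\mathcal{P}_L$-path with an $A$-vertex to fix an (E3) violation can change the membership of several paths among $\mathcal{P}_L$, $\mathcal{P}_M$, $\mathcal{P}^1$, and may thereby create new (E1), (E2) or (E5) violations; showing that your ordering and your potential absorb this is the actual content of the lemma, and it is missing. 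Likewise, the ``moving'' rules need an explicit trigger and an explicit certificate: you must name the local configurations that get pushed into $Z$ and exhibit, for each, a $P_3$-packing witnessing $|Z'|\le 5\gamma(G[Z'])$, while also checking that the move does not create $A$--$Z$ edges (Basic Condition (B3)). Until those rule definitions and the per-rule potential decreases are written out, the termination and correctness claims remain unsupported.
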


After obtaining a crucial partition $(A,B,Z)$, we use the following three reduction rules to reduce the graph.
In fact, Extended Conditions are mainly used for the third reduction rule and the analysis of the kernel size.

\begin{drul}\label{drule1}
If the number of 3-paths in $\mathcal{P}$ is greater than $k-|Z|/5$, halt and report it as a no-instance.
\end{drul}
Note that each $P_3VC$-set of the graph $G$ must contain at least $|Z|/5$ vertices in $Z$
by Basic Condition (B4)
and each $P_3VC$-set must contain one vertex from each 3-path in $\mathcal{P}$.
If the number of 3-paths in $\mathcal{P}$ is greater than $k-|Z|/5$, then any $P_3VC$-set of the graph has a
size greater than $k$.

\begin{drul}\label{drule2}
If $Comp_2(G[A])> |N_2(A)|$ (the condition in Corollary~\ref{lemma_corollary}) holds,
then find a good decomposition by Corollary~\ref{lemma_corollary} and reduce the instance based on the good decomposition.
\end{drul}

Reduction Rule~\ref{drule2} is easy to observe. Next, we consider the last reduction rule.
Let $B^*$ be the set of free-vertices in good 3-paths in $\mathcal{P}_L$ and
let $A^*$ be the set of $A_0$-vertices adjacent to 3-paths in $\mathcal{P}^1$.
Let $A'=A\cup B^*\setminus A^*$. By the definition of crucial decompositions, we can get that
\begin{lemma} \label{endingcondition}
The set $A'$ still induces of a graph of maximum degree 1.
\end{lemma}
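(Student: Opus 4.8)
The plan is to prove the slightly stronger fact that $G[A\cup B^*]$ has maximum degree $1$. Since $A^*\subseteq A$ we have $A'=(A\setminus A^*)\cup B^*\subseteq A\cup B^*$, and an induced subgraph of a graph of maximum degree $1$ again has maximum degree $1$, so the statement for $A'$ follows at once; in particular, deleting the $A^*$-vertices can only lower degrees and is irrelevant to the bound itself. Because $(A,B,Z)$ is a partition and $B^*\subseteq B$, the sets $A$ and $B^*$ are disjoint, so I only need to bound the degree of each vertex separately inside $A$ and inside $B^*$, after first checking that no edge joins the two parts.

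The absence of cross edges is immediate: every vertex of $B^*$ is a free-vertex of a $3$-path in $\mathcal{P}_L$, and a free-vertex is by definition nonadjacent to every $A$-vertex. Hence each $A$-vertex keeps exactly its degree in $G[A]$, which is at most $1$ by Basic Condition (B1). For the $B^*$-part I would examine one $3$-path $L_i\in\mathcal{P}_L$: as $L_i\in\mathcal{P}_1$, exactly one of its vertices is adjacent to $A$, and by the definition of $\mathcal{P}_L$ that vertex is an ending vertex; therefore the two free-vertices of $L_i$ are its middle vertex and its other ending vertex, which are adjacent along the path and form a single edge, the third (non-free) vertex lying outside $B^*$. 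By Extended Condition (E3), no free-vertex of one $\mathcal{P}_L$-path is adjacent to a free-vertex of another, so $G[B^*]$ is a disjoint union of these edges and has maximum degree $1$.

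Putting the pieces together, every vertex of $A\cup B^*$ has at most one neighbor, so $G[A']$ has maximum degree $1$. The only delicate point is the internal structure of a $\mathcal{P}_L$-path: one must use membership in $\mathcal{P}_1$ (equivalently (E1)) to conclude that precisely one vertex is $A$-adjacent and that it is an ending vertex, which is what forces the two surviving free-vertices to be adjacent rather than spread apart along the path; the exclusion of edges between distinct $\mathcal{P}_L$-paths then rests entirely on (E3). No other Extended Condition is needed for this particular claim.
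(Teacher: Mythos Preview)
Your proof is correct and follows essentially the same approach as the paper's: both first show that $G[A\cup B^*]$ has maximum degree $1$ (using that free-vertices are non-adjacent to $A$ and using (E3) to rule out edges between free-vertices of distinct $\mathcal{P}_L$-paths), and then pass to the subset $A'$. Your write-up is in fact more careful than the paper's on one point: you explicitly verify that the two free-vertices contributed by a single $\mathcal{P}_L$-path are its middle and one ending vertex and hence form exactly one edge, whereas the paper leaves this internal structure implicit.
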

\begin{proof}
Vertices in $B^*$ are free-vertices and
then any vertex in $B^*$ is not adjacent to a vertex in $A$.
Furthermore, no two free-vertices in $B^*$ from two different 3-paths in $\mathcal{P}_L$ are adjacent
by Extended Condition (E3). Since $A$ induces a graph of maximum degree 1, we know that $A\cup B^*$ induces a graph
of maximum degree 1. The set $A'=A\cup B^*\setminus A^*$ is a subset of $A\cup B^*$ and then
$A'$  induces of a graph of maximum degree 1.
\hfill \qed
\end{proof}

Based on Lemma~\ref{endingcondition}, we can apply the following reduction rule.

\begin{drul}\label{drule3}
If $Comp(G[A'])> 2|N(A')|-|N'_2(A')|$ (the condition in Lemma~\ref{lemma_lemma1} on set $A'$) holds,
then find a good decomposition by Lemma~\ref{lemma_lemma1} and reduce the instance based on the good decomposition.
\end{drul}

\medskip
Next, we assume that none of the three reduction rules can be applied and prove
that the graph has at most $5k$ vertices.

We consider a crucial partition $(A,B,Z)$ of the graph.
Let $k_1$ be the number of 3-paths in  $\mathcal{P}$.
Since Reduction Rule~\ref{drule1} cannot be applied, we know that
\begin{eqnarray}\label{rb1}
k_1\leq k-|Z|/5.
\end{eqnarray}
Since Reduction Rule~\ref{drule2} and Reduction Rule~\ref{drule3} cannot be applied, we also have the following two relations
\begin{eqnarray}\label{rb2}
Comp_2(G[A])\leq  |N_2(A)|,
\end{eqnarray}
and
\begin{eqnarray}\label{rb3}
Comp(G[A'])\leq 2|N(A')|-|N'_2(A')|.
\end{eqnarray}

By Extended Condition (E1), we know that $\mathcal{P}=\mathcal{P}_0\cup \mathcal{P}_1\cup \mathcal{P}^1=\mathcal{P}_0\cup
\mathcal{P}_L  \cup \mathcal{P}_M \cup \mathcal{P}^1$.
Let $x_1$ and $x_2$ be the numbers of good and bad 3-paths in $\mathcal{P}_L$, respectively.
Let $y_i$ ($i=0,1$) be the number of 3-paths in $\mathcal{P}_0$ with $i$ vertices adjacent to
some free-vertex in a 3-path in $\mathcal{P}_L$, and
$y_2$ be the number of 3-paths in $\mathcal{P}_0$ with at least two vertices adjacent to
some free-vertex in a 3-path in $\mathcal{P}_L$, i.e., the number of bad 3-paths in $\mathcal{P}_0$.
Let $z_1$ and $z_2$ be the numbers of 3-paths in $\mathcal{P}_M$ adjacent to only $A_0$-vertices and only $A_1$-vertices, respectively.
Let $w_1$ be the number of 3-paths in $\mathcal{P}^1$ adjacent to some free-vertex in a 3-path in $\mathcal{P}_L$
and $w_2$ be the number of 3-paths in $\mathcal{P}^1$ not adjacent to any free-vertex in a 3-path in $\mathcal{P}_L$.
We get that
\begin{eqnarray} \label{k1size}
k_1=x_1+x_2+y_0+y_1+y_2+z_1+z_2+w_1+w_2.
\end{eqnarray}
By Extended Conditions (E1) and (E2), we know that
\begin{eqnarray} \label{na2_size}
|N(A)_2|\leq x_1+x_2+z_2.
\end{eqnarray}

Extended Condition (E6) implies the number
of bad 3-paths in $\mathcal{P}_L$ is at most the number
of bad 3-paths in $\mathcal{P}_0$, i.e.,
\begin{eqnarray} \label{x2y2}
x_2\leq y_2.
\end{eqnarray}

Each 3-path in $\mathcal{P}^1$ is adjacent to only one $A_0$-vertex. Since
$A^*$ is the set of $A_0$-vertices adjacent to 3-paths in $\mathcal{P}^1$,
we know that $|A^*|$ is not greater than $w_1+w_2$, i.e., the number of 3-paths in $\mathcal{P}^1$.
By the definition of $A'$, we know that
\begin{eqnarray} \label{nb_0}
Comp(G[A'])\geq Comp(G[A])+x_1-(w_1+w_2).
\end{eqnarray}

Next, we consider $|N(A')|$ and $|N'_2(A')|$.
Note that each 3-path has at most one vertex adjacent to vertices in $A\setminus A^*$ by
Extended Condition (E1). This property will also hold for the vertex set $A'=(A\setminus A^*)\cup B^*$.
We prove the following two relations
\begin{eqnarray} \label{nb_1}
|N(A')|\leq x_1+x_2+y_0+y_1+z_1+z_2+w_1,
\end{eqnarray}
and
\begin{eqnarray} \label{nb_2}
|N'_2(A')|\geq y_1+z_2+w_1.
\end{eqnarray}
By Extended Conditions (E1) and (E3), we know that each 3-path in $\mathcal{P}_L$ has at most one vertex in
$N((A\setminus A^*)\cup B^*)= N(A')$. By the definition of good 3-paths in $\mathcal{P}_0$, we know that
each good 3-path in $\mathcal{P}_0$ has no vertex adjacent to vertices in $A$ and has
at most one vertex adjacent to vertices in $B^*$ (which will be in a component of size 2 in $G[A']$).
There are exactly $y_1$ vertices in good 3-paths in $\mathcal{P}_0$ adjacent to vertices in $B^*$.
No vertex in a bad 3-path in $\mathcal{P}_0$ is adjacent to a vertex in  $A\cup B^*$ by the definitions of bad 3-paths and $B^*$.
Each 3-path in $\mathcal{P}_M$ has at most one vertex adjacent to $A'$ by Extended Conditions (E1) and (E4).
Only $z_2$ vertices in 3-paths in $\mathcal{P}_M$ are adjacent to vertices in $A'$, all of which
are vertices of degree-1 in $G[A']$.
No vertex in a 3-path in $\mathcal{P}^1$ is adjacent to a vertex in $A\setminus A^*\supseteq A'$ by the definition of $A^*$.
 Furthermore, each 3-path in $\mathcal{P}^1$ has at most one vertex adjacent to vertices in $B^*$ (which will be in a component of size 2 in $G[A']$) by Extended Condition (E5) and there are  exactly $w_1$ vertices in 3-paths in $\mathcal{P}^1$ adjacent to vertices in $B^*$.
No vertex in $Z$ is adjacent to a vertex in $A\cup B^*$ by  Basic Condition (B3) and Extended Condition (E7).
Summing all above up, we can get (\ref{nb_1}) and (\ref{nb_2}).

Relations (\ref{rb3}), (\ref{nb_0}), (\ref{nb_1}) and (\ref{nb_2}) imply
\begin{eqnarray} \label{final1}
Comp(G[A])\leq 2(x_2+y_0+z_1+w_1)+x_1+y_1+z_2+w_2.
\end{eqnarray}
According to (\ref{rb2}) and (\ref{na2_size}), we know that
\begin{eqnarray} \label{final2}
Comp_2(G[A])\leq x_1+x_2+z_2.
\end{eqnarray}
Note that $|A|=Comp(G[A])+Comp_2(G[A])$, we get
\[\begin{array}{*{20}{l}}
{|A|}& = &{Comp(G[A]) +Comp_2(G[A])}&\\
{}& \le & 2( x_1+x_2+y_0+z_1+z_2+w_1)+ x_2+y_1+ w_2&~~~~~\mbox{by~(\ref{final1}) and (\ref{final2})}\\
{}& \le & 2( x_1+x_2+y_0+z_1+z_2+w_1)+ y_2+y_1+ w_2&~~~~~\mbox{by~(\ref{x2y2})}\\
{}& \le & 2k_1 &~~~~~\mbox{by~(\ref{k1size})}.
\end{array}\]

Note that $|B|=3k_1$ and $k_1\leq k-|Z|/5$ by (\ref{rb1}). We get that
\[\begin{array}{*{20}{l}}
{|V|}& = &|A|+|B|+|Z|\\
{}& \le & 5k_1+|Z|\le 5k.
\end{array}\]

\begin{theorem}
The parameterized 3-path vertex cover problem allows a kernel of at most $5k$ vertices.
\end{theorem}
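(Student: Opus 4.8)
The plan is to show that the kernelization procedure — compute a crucial partition $(A,B,Z)$ by Lemma~\ref{hardpart}, then apply Reduction Rules~\ref{drule1}--\ref{drule3} until none is applicable — runs in polynomial time and leaves a graph on at most $5k$ vertices (or correctly reports a no-instance). Polynomial time is immediate: Lemma~\ref{hardpart} produces the partition in polynomial time, and each reduction rule either halts or strictly shrinks the graph using the polynomial-time decompositions of Corollary~\ref{lemma_corollary} and Lemma~\ref{lemma_lemma2}. Hence all the real content is the size bound on the reduced instance. Writing $k_1=|\mathcal{P}|$ and using $|B|=3k_1$ together with the no-instance guarantee $k_1\le k-|Z|/5$ from (\ref{rb1}), the whole theorem reduces to a single inequality, $|A|\le 2k_1$: then $|V|=|A|+|B|+|Z|\le 5k_1+|Z|\le 5k$.

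To bound $|A|$ I would use the identity $|A|=Comp(G[A])+Comp_2(G[A])$ (each size-$1$ component contributes one vertex and one component, each size-$2$ component two vertices and one component) and bound the two terms separately against counts of 3-paths in $\mathcal{P}$. First I would classify every 3-path of $\mathcal{P}$ by how it meets $A$, introducing the counters $x_1,x_2$ (good/bad paths of $\mathcal{P}_L$), $y_0,y_1,y_2$ (paths of $\mathcal{P}_0$ touching free-vertices of $\mathcal{P}_L$), $z_1,z_2$ (paths of $\mathcal{P}_M$ hitting $A_0$- versus $A_1$-vertices) and $w_1,w_2$ (paths of $\mathcal{P}^1$), so that (\ref{k1size}) writes $k_1$ as their sum. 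Failure of Reduction Rule~\ref{drule2} gives $Comp_2(G[A])\le|N_2(A)|$, which Extended Conditions (E1),(E2) turn into $|N_2(A)|\le x_1+x_2+z_2$ via (\ref{na2_size}), yielding (\ref{final2}). For the other term I pass to the modified set $A'=A\cup B^*\setminus A^*$ of Lemma~\ref{endingcondition} and invoke the failure of Reduction Rule~\ref{drule3}, that is $Comp(G[A'])\le 2|N(A')|-|N'_2(A')|$ in (\ref{rb3}); combined with (\ref{nb_0}), (\ref{nb_1}) and (\ref{nb_2}) this produces (\ref{final1}). Adding (\ref{final1}) and (\ref{final2}) and using $x_2\le y_2$ from (\ref{x2y2}) collapses the right-hand side to $\le 2k_1$.

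The heart of the argument, and the step I expect to be hardest, is proving the three neighborhood estimates (\ref{nb_0})--(\ref{nb_2}) together with (\ref{na2_size}) and (\ref{x2y2}), since this is exactly where the seven Extended Conditions are consumed: (E1) forces each packing path to contribute at most one vertex to $N(A')$; (E3),(E4),(E7) guarantee that moving the free-vertices $B^*$ of good $\mathcal{P}_L$-paths into $A'$ keeps $G[A']$ of maximum degree $1$ and creates no neighbor in $Z$; (E2) splits $\mathcal{P}_M$-paths into the $z_1,z_2$ buckets; and (E5),(E6) control how $\mathcal{P}^1$- and $\mathcal{P}_0$-paths attach to $B^*$, producing precisely the $w_1$ term in (\ref{nb_1})--(\ref{nb_2}) and the pairing $x_2\le y_2$ in (\ref{x2y2}). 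The delicate point is the choice of $A'$: one must delete the $A_0$-vertices $A^*$ feeding the $\mathcal{P}^1$-paths (paying the $-(w_1+w_2)$ in (\ref{nb_0})) while adjoining $B^*$, so that every good $\mathcal{P}_L$-path is converted from a single degree-$1$ neighbor of $A$ into a size-$2$ component of $G[A']$; checking that this bookkeeping is consistent across all path types — and that the Extended Conditions really are strong enough to enforce it — is the genuine combinatorial core. Once these inequalities are in hand, the final arithmetic $|A|=Comp(G[A])+Comp_2(G[A])\le 2k_1$, and hence $|V|\le 5k$, is routine.
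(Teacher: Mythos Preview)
Your proposal is correct and follows the paper's argument essentially step for step: reduce to $|A|\le 2k_1$, split $|A|=Comp(G[A])+Comp_2(G[A])$, bound the two pieces via the failure of Reduction Rules~\ref{drule2} and~\ref{drule3} using the counters $x_i,y_i,z_i,w_i$ and inequalities (\ref{na2_size})--(\ref{nb_2}), and finish with $x_2\le y_2$ and (\ref{k1size}). Your attribution of each Extended Condition to the inequality it enables also matches the paper's use.
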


%

\end{document}